\title{contrastive separative coding for Self-supervised representation learning}
\name{Jun Wang$^{\star}$ \qquad Max W. Y. Lam$^{\star}$ \qquad Dan Su$^{\star}$ \qquad Dong Yu$^{\dagger}$}
\address{$^{\star}$ Tencent AI Lab, Shenzhen, China\\$^{\dagger}$ Tencent AI Lab, Bellevue WA, USA} 
\newtheorem{definition}{Definition}[section]
\newtheorem{assumption}{Assumption}[section]
\newtheorem{claim}{Claim}[section]
\begin{document}
\ninept
\maketitle
\begin{abstract}
To extract robust deep representations from long sequential modeling of speech data, we propose a self-supervised learning approach, namely Contrastive Separative Coding (CSC). Our key finding is to learn such representations by separating the target signal from contrastive interfering signals. First, a multi-task separative encoder is built to extract shared separable and discriminative embedding; secondly, we propose a powerful cross-attention mechanism performed over speaker representations across various interfering conditions, allowing the model to focus on and globally aggregate the most critical information to answer the ``query" (current bottom-up embedding) while paying less attention to interfering, noisy, or irrelevant parts; lastly, we form a new probabilistic contrastive loss which estimates and maximizes the mutual information between the representations and the global speaker vector. While most prior unsupervised methods have focused on predicting the future, neighboring, or missing samples, we take a different perspective of predicting the interfered samples. Moreover, our contrastive separative loss is free from negative sampling. The experiment demonstrates that our approach can learn useful representations achieving a strong speaker verification performance in adverse conditions.
\end{abstract}
\begin{keywords}
Speaker verification, speech separation, self attention, contrastive loss
\end{keywords}

\section{Introduction}
\vspace{-0.5em}
\label{sec:intro}
Learning high-level representations from labeled data has achieved marvelous successes in modern speech processing. To extract reliable speaker representations in realistic situations, however, conventional speaker verification, speaker identification, and speaker diarization (SV, SI, SD) systems generally require complicated pipelines \cite{David2018xvector, liu2018sv, Yoshua2018sv}. One has to prepare three independent models: (i) a speech activity detection model to generate short speech segments with no interference or overlapping, (ii) a speaker embedding extraction model, and (iii) a clustering or a PLDA model including covariance matrices is needed to group the short segments to the same or different speaker. Jointly supervised modeling methods\cite{nara2019sd, huang2020sd, mia2018sd} have been studied to alleviate the long preparation process and take into account the dependencies between these models. More recently, end-to-end neural speaker diarization \cite{yusuke2019sd_pit,yusuke2019sd_attn, yusuke2020sd_attractors} has been proposed to overcome the situation that the previous systems can not deal with speaker overlap parts because each time slot is assigned to one speaker.

Despite the breakthrough seen by these supervised methods, many challenges remain, such as data robustness, efficiency, and generalization. To alleviate these issues, improving representation learning requires features that are less specialized towards solving a single supervised task. Unsupervised or self-supervised learning \cite{wav2vec, dim19, nce18, word2vec, speech2vec} is a promising apparatus towards generic and robust representation learning. 
A common strategy is to use the conditional dependency between the features of interest and the same shared high-level latent information/context. Advanced work in unsupervised or self-supervised learning has successfully used the strategy to learn deep representations by predicting neighboring or missing words \cite{word_embedding13, Jacob2019Bert}, predicting the relative position of image patches \cite{carl2015visual} or color from grey-scale \cite{richard2016color}, or most recently predicting future frames \cite{nce18} or contextual information \cite{dim19}.

Classical predictive coding theories in neuroscience suggest that the brain predicts observations at various levels of abstraction \cite{karl2005neuro,rao1999neuro}. When one is listening to overlapped speakers, we infer the features of interest conditionally dependent on both low-mid levels of abstraction of the same speaker (e.g., spectrum continuity and structure, timbre, phoneme, syllable, etc.) and high levels of abstraction (E.g., speaker characterizations, spatial position, etc.). Inspired by the above, we hypothesize these different levels of abstraction have shared bottom features. We propose Contrastive Separative Coding (CSC) model as the following: first, an encoder compresses the raw input into a compact latent space of separable and discriminative embedding shared by various levels of abstraction task (Sec.\ref{sec:2}, \ref{sec:pit}); secondly, we propose a powerful cross-attention model in this latent space to model the high-level abstraction of speaker representations (Sec.\ref{sec:cross}); lastly, we form a new probabilistic contrastive loss which estimates and maximizes the mutual information between the representations and the global speaker vector (Sec.\ref{sec:csc}).
\section{Related Work and our contributions}
\vspace{-0.5em}
\label{sec:related}
The compositional attention networks \cite{Hudson2018ComposAttn} decomposes machine reasoning into a series of attention-based reasoning operations that are directly inferred from the data, without resorting to any strong supervision. Interaction between its two modalities – visual and textual, or knowledge base and query – is mediated through soft-attention only. The soft attention enhances its model’s ability to perform reasoning operations that pertain to the global aggregation of information across different regions of the image. Earlier, cross-stitch strategy \cite{ishan2016cross} has been proposed between text and image, on which both \cite{Hudson2018ComposAttn} and our proposed cross-attention approach can be regarded as new variations. To the best of our knowledge, however, we are the first to use a cross-attention strategy between high-level speaker representations and low-level speech features.
\begin{figure*}[t!]
 \centering
    \includegraphics[width=\linewidth]{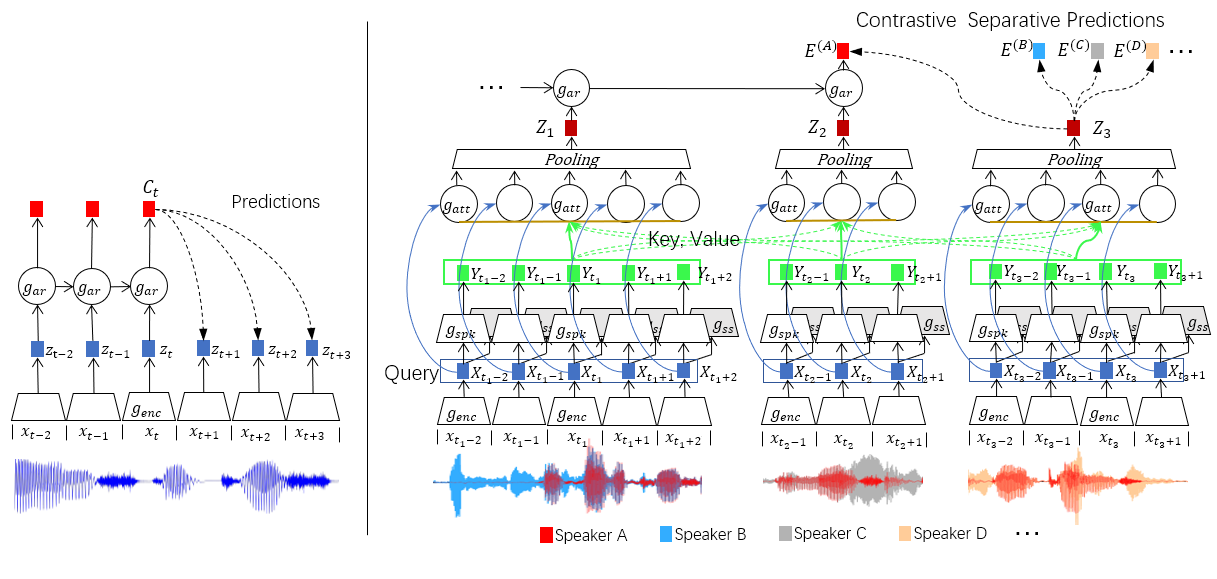}
\vspace{-3em}
\caption{Left: overview of Contrastive Predictive Coding \cite{nce18}. Right: our proposed Contrastive Separative Coding.}
\label{fig:archi}
\vspace{-2em}
\end{figure*}

Another closely related prior work that does not resort to any strong supervision is the Contrastive Predictive Coding (CPC) \cite{nce18}, as shown on the left in Fig. \ref{fig:archi}. Comparing it to our proposed CSC on the right, we summarize our key contributions in four folds:
\begin{itemize}[leftmargin=*]
\item While most unsupervised prior work, including \cite{nce18}, has focused on predicting the future, neighboring, or missing samples, our perspective is different in that it focuses on interfered samples, i.e., learning separative representations from observations with various interference.
\item To learn mid-level representations by using the powerful cross-attention strategy.
\item To study the prediction strategy on latent global speaker features, for the first time at a much higher level of abstraction than the prior work \cite{wav2vec,dim19, nce18,speech2vec,Jacob2019Bert}. We consider predicting the ``global feature" that spans various observations more interesting, especially across various interfering conditions, thus the model needs to infer more global structure to preserve the latent shared information. 
\item A novel contrastive separative loss is first used as a bound on mutual information. In contrast to the prior common strategy, our proposed loss models an inverse conditional dependency between the features of interest and the shared high-level latent context, which also holds for estimating the mutual information thanks to its symmetric property. This makes our conditional predictions easier to model and free from negative sampling as generally required in prior work, at the cost of weak supervision about the shared high-level latent context -- here is to know the set of mixtures containing signals by the same speaker, e.g., the ones marked in red in Fig. \ref{fig:archi}.
\vspace{-1.5em}
\end{itemize}
\section{Proposed Approach}
\vspace{-0.5em}
\label{sec:2}
On the right of Fig. \ref{fig:archi} is the architecture of Contrastive Separative Coding (CSC) model. First, an encoder $g_{enc}$ maps the encoded and segmented input mixture sequences of $x_{t_i}, (i=1,2,3,...)$ to sequences of latent representations $X_{t_i}$. We denote them as 2D tensors $\textbf{X}_i \in \mathbb{R}^{D\times S_i}$, where $D$ is the feature dimension, $S_i$ is the sequence length. Secondly, task-specific encoders, $g_{spk}$ and $g_{ss}$, map the shared latent representations $\textbf{X}_i$ to compact latent spaces of various levels of abstraction tasks, i.e., speaker embedding ($\textbf{Y}_i \in \mathbb{R}^{D\times S_i}$) and speech embedding, respectively. Next, a powerful cross-attention model $g_{att}$ in the speaker embedding space computes the high-level abstraction of speaker representations that are pooled into $\textbf{Z}_i\in \mathbb{R}^{D}$. Finally, an autoregressive model $g_{ar}$ summarizes all the past $\textbf{Z}_i$ to a global speaker vector $\textbf{E} = g_{ar}(\textbf{Z}_{i\le{t}})\in \mathbb{R}^{D}$.
\subsection{Bottom-up Cross Attention}
\vspace{-0.5em}
\label{sec:cross}
We cast the query vector onto the latent speaker space via the cross-attention model $\mathbf{Z}_i=\text{Pooling}(g_{att}(\textbf{X}_i, \textbf{Y}_j))$. The bottom-up queries are raised from the shared bottom features $\textbf{X}$ to retrieve the most relevant speaker-dependent global characteristics from the latent speaker space of $\textbf{Y}$ and filter out non-salient, noisy, or redundant information. $\text{Query}(\cdot)$, $\text{Key}(\cdot)$, and $\text{Value}(\cdot)$ denote linear transformation functions, where the corresponding input vectors are linearly projected along the feature dimension ($\mathbb{R}^{D}\mapsto \mathbb{R}^{D}$) into query, key, and value vectors, respectively. Our detailed implementation of these functions is the same as \cite{matt2018self}.

Specifically, this is achieved by computing the inner product between $\text{Query}(\textbf{X}_i)\in\mathbb{R}^{D\times S_i}$ and $\text{Key}(\mathbf{Y}_j)\in\mathbb{R}^{D\times S_j}$:
\begin{equation}
\label{eq:a1}
   \mathbf{a}_{i,j}=\text{softmax}(\text{Query}(\textbf{X}_i)^{\top}\cdot\text{Key}(\mathbf{Y}_j)).
\end{equation}
yielding a cross-attention distribution $\mathbf{a}_{i,j}\in\mathbb{R}^{S_i\times S_j}$, where $S_i, S_j$ are the sequence lengths of either different observations or the same observation ($S_i= S_j$, amount to self-attention).

Finally, we compute the sum of $\text{Value}(\mathbf{Y}_j)\in\mathbb{R}^{D\times S_j}$ weighted by the cross-attention distribution $\mathbf{a}_{i,j}$, and then average over $S_i$ segments, to produce a separative embedding $\mathbf{Z}_{i}\in \mathbb{R}^{D}$:
\begin{equation}
\label{eq:a2}
    \mathbf{Z}_i=\frac{1}{S_i}\sum_{S_i}\sum_{S_j}\mathbf{a}_{i,j}\cdot\text{Value}(\mathbf{Y}_j)^\top.
\end{equation}
\subsection{Contrastive Separative Coding Loss}
\vspace{-0.5em}
\label{sec:csc}
Next, we introduce CSC loss, which takes the following form:
\begin{align}
\label{eq:1.1}
    &\mathcal{L}_{\tiny\text{CSC}} = -
    \mathbb{E}_{\mathcal{D}}\left[\log\left({f(\mathbf{Z}^{(n_c)}, \textbf{E}^{(n_c)})}/{\sum_{n=1}^{N}f(\mathbf{Z}^{(n_c)}, \textbf{E}^{(n)})}\right)\right],\\
\label{eq:1.2}
    &f(\mathbf{Z}^{(n_c)}, \textbf{E}^{(n)})= \exp\left(-\alpha\lVert\mathbf{Z}^{(n_c)}- \textbf{E}^{(n)}\rVert_{2}^{2}\right),
\end{align}
We denote $\textbf{E}^{(n)}$ as the global speaker vector for speaker $n$ out of the overall $N$ speakers in the training set $\mathcal{D}$ and the separative embedding $\mathbf{Z}^{(n_c)}$ for the $c$-th separated signal, for $c=1,...,C$, given that each audio input is a mixture of $C$ sources and the $c$-th source is generated by one of the speakers indexed by $n_c$ out of the overall $N$ speakers. We assume that the training set $\mathcal{D}$ sufficiently defined the sample space of their joint distributions. For notational simplicity, we omit the subscript $i$ and use $\mathbf{Z}$ to denote $\mathbf{Z}^{(n_c)}$, $\textbf{E}$ to denote $\textbf{E}^{(n_c)}$ in the following text, unless otherwise specified.
First, we study the relationship between CSC loss and mutual information:
\begin{definition}
\label{def:1}
Mutual information of the global speaker vector and the separative embedding is defined as
\begin{align}
\label{eq:2}
\mathcal{I}(\textbf{E};\mathbf{Z})=\mathbb{E}_{\mathcal{D}}\left[\log \frac{p(\textbf{E}, \mathbf{Z})}{p(\textbf{E}) p(\mathbf{Z})}\right].
\end{align}
\end{definition}
Then, we make the following assumptions:
\begin{assumption}
\label{ass:1}
With a suitable mathematical form for function $f(\cdot)$, we can model a density ratio defined as
\begin{align}
\label{eq:3}
f(\mathbf{Z}, \textbf{E}) \propto \frac{p(\textbf{E}|\mathbf{Z})}{p(\textbf{E})}=\frac{p(\mathbf{Z}|\textbf{E})}{p(\mathbf{Z})}.
\end{align}
\end{assumption}

\begin{assumption}
\label{ass:1}
Considering the case of $n\neq n_c$, since the separative embedding $\mathbf{Z}^{(n_c)}$ does not belong to speaker $n$, $\textbf{E}^{(n)}$ should not be dependent on $\mathbf{Z}^{(n_c)}$. Therefore, it is sensible to assume
\begin{align}
\label{eq:4}
p(\textbf{E}^{(n)}|\mathbf{Z}^{(n_c)})=p(\textbf{E}^{(n)}),&\,\,\,\,\,\,\,\,\forall\,\,n\neq n_c.
\end{align}
\end{assumption}

Based on these assumptions, we can deduce the claims:
\begin{claim}
\label{claim:1}
Minimizing CSC loss results in maximizing mutual information between global speaker vector and the separative embedding, since CSC loss $\mathcal{L}_{\tiny\text{CSC}}$ serves as an upper bound of the negative mutual information $-\mathcal{I}(\textbf{E};\mathbf{Z})$.
\end{claim}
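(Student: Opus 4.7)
The plan is to mimic the InfoNCE-style derivation used to bound mutual information in Contrastive Predictive Coding, but adapted to the inverse density-ratio formulation suggested by Assumption 1. First I would rewrite $\mathcal{L}_{\text{CSC}}$ by inverting the ratio inside the logarithm, so that it becomes $\mathbb{E}_{\mathcal{D}}\!\left[\log\!\left(1 + \sum_{n\neq n_c} f(\mathbf{Z}^{(n_c)},\textbf{E}^{(n)}) / f(\mathbf{Z}^{(n_c)},\textbf{E}^{(n_c)})\right)\right]$. This isolates the ``positive'' term from the contrastive ``negative'' terms and puts the expression into the familiar form that admits a closed-form bound in terms of a density ratio.

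Next I would substitute the optimal density ratio from Assumption 1, $f(\mathbf{Z},\textbf{E}) \propto p(\textbf{E}\mid\mathbf{Z})/p(\textbf{E}) = p(\mathbf{Z}\mid\textbf{E})/p(\mathbf{Z})$, replacing each $f$ term. The negative terms become $p(\textbf{E}^{(n)}\mid\mathbf{Z}^{(n_c)})/p(\textbf{E}^{(n)})$ for $n\neq n_c$, and by Assumption 2 each of these collapses to $1$. The positive term yields $p(\mathbf{Z}^{(n_c)}\mid\textbf{E}^{(n_c)})/p(\mathbf{Z}^{(n_c)})$ in the denominator. The sum over the $N-1$ negatives therefore simplifies to a clean factor of $(N-1)$ times the reciprocal of the positive density ratio.

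Then I would apply the elementary inequality $\log(1+x) \ge \log x$ for $x>0$ to lower-bound the logarithm, obtaining $\mathcal{L}_{\text{CSC}} \ge \mathbb{E}_{\mathcal{D}}\!\left[\log\!\left((N-1)\,p(\mathbf{Z}^{(n_c)})/p(\mathbf{Z}^{(n_c)}\mid \textbf{E}^{(n_c)})\right)\right] = -\mathcal{I}(\textbf{E};\mathbf{Z}) + \log(N-1)$, invoking Definition 1 and the symmetry of mutual information. Rearranging yields $-\mathcal{I}(\textbf{E};\mathbf{Z}) \le \mathcal{L}_{\text{CSC}} - \log(N-1)$, i.e. the CSC loss (up to a constant that does not depend on the model) is an upper bound on $-\mathcal{I}(\textbf{E};\mathbf{Z})$, so minimizing it tightens a lower bound on the mutual information; the bound becomes tighter as $N$ grows.

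The main obstacle I anticipate is bookkeeping around what ``expectation over $\mathcal{D}$'' actually means once multiple speaker indices are involved: the positive pair $(\mathbf{Z}^{(n_c)},\textbf{E}^{(n_c)})$ is drawn from the joint $p(\mathbf{Z},\textbf{E})$ while each negative $\textbf{E}^{(n)}$ with $n\neq n_c$ must be interpreted as drawn from the marginal $p(\textbf{E})$ independently of $\mathbf{Z}^{(n_c)}$. Assumption 2 is precisely what licenses this marginal interpretation, and I would make this explicit before the substitution step so that the $(N-1)$ factor is rigorously justified rather than swept under the rug. A secondary subtlety is that Assumption 1 only holds up to a normalizing constant; I would note that the constant cancels between numerator and denominator in the CSC softmax, so the bound is unaffected.
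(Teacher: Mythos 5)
Your proposal follows essentially the same route as the paper's proof: substitute the density ratio of Assumption 1 into the loss, isolate the positive term, collapse the $N-1$ negative terms to constants via Assumption 2, and lower-bound the logarithm to arrive at $-\mathcal{I}(\textbf{E};\mathbf{Z})$ plus an additive constant. The only substantive difference is the final elementary inequality: you use $\log(1+x)\ge\log x$, valid for all $x>0$, which yields the constant $\log(N-1)$, whereas the paper uses $1+(N-1)x\ge Nx$ to obtain $\log N$ --- a step that silently requires $p(\textbf{E})/p(\textbf{E}\mid\mathbf{Z})\le 1$ --- so your variant is actually the more rigorous of the two at the cost of a marginally weaker constant, and your remarks on the sampling semantics of $\mathbb{E}_{\mathcal{D}}$ and the cancellation of the proportionality constant make explicit two points the paper leaves implicit.
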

\begin{proof}
To prove this claim, we substitute Eq. (\ref{eq:3}) into Eq. (\ref{eq:1.1}) and obtain the following results:
\begin{align*}
    \mathcal{L}_{\tiny\text{CSC}} &= -
    \mathbb{E}_{\mathcal{D}}\left[\log\left(\frac{\frac{p(\textbf{E}|\mathbf{Z})}{p(\textbf{E})}}{\sum_{n=1}^{N}\frac{p(\textbf{E}^{(n)}|\mathbf{Z})}{p(\textbf{E}^{(n)})}}\right)\right]\\
    &=\mathbb{E}_{\mathcal{D}}\left[\log\left(1+\frac{\sum_{n\neq n_c}\frac{p(\textbf{E}^{(n)}|\mathbf{Z})}{p(\textbf{E}^{(n)})}}{\frac{p(\textbf{E}|\mathbf{Z})}{p(\textbf{E})}}\right)\right]\\
    &=\mathbb{E}_{\mathcal{D}}\left[\log\left(1+(N-1)\frac{p(\textbf{E})}{p(\textbf{E}|\mathbf{Z})}\right)\right]\\
    &\geq\mathbb{E}_{\mathcal{D}}\left[\log\left(N\frac{p(\textbf{E})}{p(\textbf{E}|\mathbf{Z})}\right)\right]
    =\log N-\mathbb{E}_{\mathcal{D}}\left[\log\left(\frac{p(\textbf{E}, \mathbf{Z})}{p(\textbf{E})p(\mathbf{Z})}\right)\right]\\
    &=\log N-\mathcal{I}(\textbf{E};\mathbf{Z})
\end{align*}
This proves that $\mathcal{L}_{\tiny\text{CSC}}$ is an upper bound of $ -\mathcal{I}(\textbf{E};\mathbf{Z})$.
\end{proof}
Next, we study our proposed form of $f(\mathbf{Z}, \textbf{E})$ and its associated properties when minimizing CSC loss.
\begin{claim}
\label{claim:2}
Applying our proposed form of $f(\mathbf{Z}, \textbf{E})$ to $\mathcal{L}_{\tiny\text{CSC}}$ corresponds to treating each global speaker vector $\textbf{E}$ as a cluster centroid (Gaussian mean) of different separative embedding vectors $\mathbf{Z}$ generated by the same speaker $i_j$ with a learnable parameter $\alpha > 0$ controlling the cluster size (Gaussian variance):
\begin{align}
\label{eq:6}
    p(\mathbf{Z}|\textbf{E})&=\mathcal{N}(\textbf{E}, (2\alpha)^{-1}\mathbf{I})\\
\label{eq:7}
    p(\mathbf{Z})&=\mathcal{N}(\mathbf{0}, (2\alpha)^{-1}\mathbf{I}).
\end{align}
\end{claim}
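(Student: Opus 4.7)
The plan is to start from the two Gaussian forms proposed in Eq.~(\ref{eq:6})--(\ref{eq:7}), compute the implied density ratio, and verify that it matches the proposed $f(\mathbf{Z},\textbf{E})=\exp(-\alpha\lVert\mathbf{Z}-\textbf{E}\rVert_2^2)$ up to a factor depending only on $\mathbf{Z}$ (and on $\alpha$). I would then argue that such a $\mathbf{Z}$-only factor cancels in the softmax of the CSC loss (Eq.~(\ref{eq:1.1})), so the functional form $f$ used in the loss is fully consistent with Assumption~\ref{ass:1} under the Gaussian centroid interpretation.

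First I would write each Gaussian density explicitly. With covariance $\Sigma=(2\alpha)^{-1}\mathbf{I}$, one has $\Sigma^{-1}=2\alpha\mathbf{I}$ and $\lvert\Sigma\rvert=(2\alpha)^{-D}$, so the quadratic form $\tfrac{1}{2}(\mathbf{Z}-\boldsymbol{\mu})^{\top}\Sigma^{-1}(\mathbf{Z}-\boldsymbol{\mu})$ simplifies to $\alpha\lVert\mathbf{Z}-\boldsymbol{\mu}\rVert_2^2$, and the normalization constant $(2\pi(2\alpha)^{-1})^{-D/2}$ is identical for $p(\mathbf{Z}\mid\textbf{E})$ (mean $\textbf{E}$) and $p(\mathbf{Z})$ (mean $\mathbf{0}$). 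Taking the ratio,
\begin{equation*}
\frac{p(\mathbf{Z}\mid\textbf{E})}{p(\mathbf{Z})}=\frac{\exp(-\alpha\lVert\mathbf{Z}-\textbf{E}\rVert_2^2)}{\exp(-\alpha\lVert\mathbf{Z}\rVert_2^2)}=f(\mathbf{Z},\textbf{E})\cdot\exp(\alpha\lVert\mathbf{Z}\rVert_2^2),
\end{equation*}
so the proposed $f$ agrees with the density ratio demanded by Assumption~\ref{ass:1} up to the multiplicative factor $\exp(\alpha\lVert\mathbf{Z}\rVert_2^2)$, which depends on $\mathbf{Z}$ but not on $\textbf{E}$.

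Next I would substitute $f$ into the CSC loss and observe that this $\mathbf{Z}$-only factor appears identically in the numerator and in every term of the denominator of the softmax over $n=1,\ldots,N$, and therefore cancels pointwise; equivalently, after expanding $\lVert\mathbf{Z}-\textbf{E}^{(n)}\rVert_2^2=\lVert\mathbf{Z}\rVert_2^2-2\mathbf{Z}^{\top}\textbf{E}^{(n)}+\lVert\textbf{E}^{(n)}\rVert_2^2$, the common $-\alpha\lVert\mathbf{Z}\rVert_2^2$ term drops out of both numerator and denominator. Hence using $f$ in $\mathcal{L}_{\tiny\text{CSC}}$ is operationally indistinguishable from using the true density ratio under the stated Gaussians, which establishes the correspondence: each $\textbf{E}^{(n)}$ plays the role of an isotropic Gaussian centroid, $\alpha$ controls the inverse cluster size through the variance $(2\alpha)^{-1}$, and the requirement $\alpha>0$ simply reflects that this variance must be positive for $p(\mathbf{Z}\mid\textbf{E})$ to be a proper density.

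I expect the main (and essentially only) subtle point to be this proportionality step: one cannot claim that $f$ \emph{equals} the density ratio, only that it differs from it by a speaker-independent function of $\mathbf{Z}$. Making explicit that this discrepancy is absorbed by the softmax is what legitimises the Gaussian cluster interpretation and, in turn, makes Claim~\ref{claim:1} applicable to the specific $f$ of Eq.~(\ref{eq:1.2}). The remaining manipulations are routine Gaussian-density algebra.
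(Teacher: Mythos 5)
Your proposal is correct and follows essentially the same route as the paper's own proof: both compute the Gaussian density ratio $p(\mathbf{Z}\mid\textbf{E})/p(\mathbf{Z})=\exp(-\alpha\lVert\mathbf{Z}-\textbf{E}\rVert_2^2+\alpha\lVert\mathbf{Z}\rVert_2^2)$ and show that the $\textbf{E}$-independent factor $\exp(\alpha\lVert\mathbf{Z}\rVert_2^2)$ cancels between the numerator and denominator of the softmax in $\mathcal{L}_{\tiny\text{CSC}}$, recovering exactly the $f$ of Eq.~(\ref{eq:1.2}). If anything, your write-up is slightly more careful than the paper's in making the normalization constants and the proportionality-versus-equality distinction explicit.
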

\begin{proof}
Considering a density ratio $\hat{f}$:
\begin{align*}
\hat{f}(\mathbf{Z}, \textbf{E}) &\propto \frac{\mathcal{N}(\textbf{E}, (2\alpha)^{-1}\mathbf{I})}{\mathcal{N}(\mathbf{0}, (2\alpha)^{-1}\mathbf{I})}
=\frac{\exp\left(-1/2\cdot(2\alpha)\lVert\mathbf{Z}-\textbf{E}\rVert_2^2\right)}{\exp\left(-1/2\cdot(2\alpha)\lVert\mathbf{Z}\rVert_2^2\right)}\\
&=\exp\left(-\alpha\lVert\mathbf{Z}-\textbf{E}\rVert_2^2+\alpha\lVert\mathbf{Z}\rVert_2^2\right)
\end{align*}
Now, we evaluate $\mathcal{L}_{\tiny\text{CSC}}$ with the above-defined $\hat{f}(\mathbf{Z}, \textbf{E})$:

\begin{align*}
    \mathcal{L}_{\tiny\text{CSC}} &= -\mathbb{E}_{\mathcal{D}}\left[\log\left(\frac{\exp\left(-\alpha\lVert\mathbf{Z}-\textbf{E}\rVert_2^2+\alpha\lVert\mathbf{Z}\rVert_2^2\right)}{\sum_{n=1}^{N}\exp\left(-\alpha\lVert\mathbf{Z}-\textbf{E}^{(n)}\rVert_2^2+\alpha\lVert\mathbf{Z}\rVert_2^2\right)}\right)\right]\\
    &= -\mathbb{E}_{\mathcal{D}}\left[\log\left(\frac{\exp\left(-\alpha\lVert\mathbf{Z}-\textbf{E}\rVert_2^2\right)}{\sum_{n=1}^{N}\exp\left(-\alpha\lVert\mathbf{Z}-\textbf{E}^{(n)}\rVert_2^2\right)}\right)\right]\\
    &= -\mathbb{E}_{\mathcal{D}}\left[\log\left(\frac{f(\mathbf{Z}, \textbf{E})}{\sum_{n=1}^{N}f(\mathbf{Z}, \textbf{E}^{(n)})}\right)\right]
\end{align*}
\vspace{-1.5em}
\end{proof}
\begin{claim}
\label{claim:3}
With our proposed form of $f(\mathbf{Z}, \textbf{E})$, minimizing CSC loss results in minimizing the distance between the separative embedding $\mathbf{Z}$ and the corresponding global speaker vector $\textbf{E}$ meanwhile maximizing the distance between other global speaker vectors $\{\textbf{E}^{(n)} |\forall n\neq n_c\}$.
\end{claim}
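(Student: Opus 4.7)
The plan is to substitute the explicit form of $f$ from Claim~\ref{claim:2} into the CSC loss and expose its softmax cross-entropy structure, from which both the attractive and the repulsive geometric effects can be read off.

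First, I would carry out the substitution; as already observed in the proof of Claim~\ref{claim:2}, the $\alpha\|\mathbf{Z}\|_2^2$ pieces in the Gaussian ratio cancel between the numerator and every denominator term, so
\[
\mathcal{L}_{\text{CSC}} = \mathbb{E}_{\mathcal{D}}\Bigl[\alpha\|\mathbf{Z}-\textbf{E}\|_2^2 + \log\sum_{n=1}^{N}\exp\bigl(-\alpha\|\mathbf{Z}-\textbf{E}^{(n)}\|_2^2\bigr)\Bigr].
\]
This is a standard softmax cross-entropy: a direct quadratic penalty on the distance to the matched centroid, plus a log-partition that couples the distances to all $N$ candidate speaker vectors.

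Next, I would decouple the two effects by passing to the probabilistic reading $\mathcal{L}_{\text{CSC}}=-\mathbb{E}_{\mathcal{D}}[\log q(n_c\mid\mathbf{Z})]$, where $q(n\mid\mathbf{Z})\propto\exp(-\alpha\|\mathbf{Z}-\textbf{E}^{(n)}\|_2^2)$ is the induced Boltzmann distribution. Since this is the negative log-likelihood of the correct speaker index, minimizing it drives $q(n_c\mid\mathbf{Z})\to 1$, which in turn requires $\|\mathbf{Z}-\textbf{E}\|_2^2$ to be the unique smallest element of $\{\|\mathbf{Z}-\textbf{E}^{(n)}\|_2^2\}_{n=1}^{N}$. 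To make the two geometric effects explicit, I would compute the gradients of $\mathcal{L}_{\text{CSC}}$ with respect to $\textbf{E}$ and to each $\textbf{E}^{(n)}$ with $n\neq n_c$: the former evaluates to $2\alpha\bigl(1-q(n_c\mid\mathbf{Z})\bigr)(\textbf{E}-\mathbf{Z})$, whose descent direction pulls $\textbf{E}$ toward $\mathbf{Z}$, while the latter evaluates to $-2\alpha\,q(n\mid\mathbf{Z})(\textbf{E}^{(n)}-\mathbf{Z})$, whose descent direction repels $\textbf{E}^{(n)}$ away from $\mathbf{Z}$ with strength proportional to the mass the model currently places on the wrong speaker.

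The main obstacle I anticipate is that the explicit quadratic term and the log-partition cannot be minimized independently, because the log-partition also contains the $n=n_c$ summand, which taken alone would push $\textbf{E}$ away from $\mathbf{Z}$. The argument must therefore exploit the competitive structure of the softmax: either through the gradient calculation above, or by observing that $\log\sum_n \exp(-\alpha\|\mathbf{Z}-\textbf{E}^{(n)}\|_2^2)$ is a smooth surrogate for $-\alpha\min_n \|\mathbf{Z}-\textbf{E}^{(n)}\|_2^2$, so that consistency with the explicit $\alpha\|\mathbf{Z}-\textbf{E}\|_2^2$ term at its minimum forces the matched distance to be the smallest and the mismatched ones to grow, which is precisely the two-sided conclusion claimed.
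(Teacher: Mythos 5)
Your decomposition of the loss is exactly the one the paper uses: substituting Eq.~(\ref{eq:1.2}) into Eq.~(\ref{eq:1.1}) and cancelling the common factor to get
\[
\mathcal{L}_{\text{CSC}} = \mathbb{E}_{\mathcal{D}}\left[\alpha\lVert\mathbf{Z}-\textbf{E}\rVert_2^2 + \log\sum_{n=1}^{N}\exp\left(-\alpha\lVert\mathbf{Z}-\textbf{E}^{(n)}\rVert_2^2\right)\right],
\]
after which the paper simply reads off the first term as attractive and the second as repulsive. Where you go beyond the paper is in noticing that this reading is not quite airtight --- the log-sum-exp also contains the $n=n_c$ summand, which in isolation would push $\textbf{E}$ away from $\mathbf{Z}$ --- and in resolving it with the gradient computation: $\nabla_{\textbf{E}}\mathcal{L} = 2\alpha\bigl(1-q(n_c\mid\mathbf{Z})\bigr)(\textbf{E}-\mathbf{Z})$ and $\nabla_{\textbf{E}^{(n)}}\mathcal{L} = -2\alpha\,q(n\mid\mathbf{Z})(\textbf{E}^{(n)}-\mathbf{Z})$ for $n\neq n_c$ (both of which I checked and are correct). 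This makes the attraction/repulsion claim precise, shows the net coefficient on the matched pair stays nonnegative, and quantifies the repulsion strength by the softmax mass on the wrong speaker. The paper's version is shorter but informal on exactly this point; yours costs a few extra lines and buys a rigorous, sign-correct statement of the claim. No gap.
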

\begin{proof}
By substituting Eq. (\ref{eq:1.2}) into Eq. (\ref{eq:1.1}), we have
\begin{align*}
\mathcal{L}_{\tiny\text{CSC}} &= -
    \mathbb{E}_{\mathcal{D}}\left[\log\left(\frac{\exp\left(-\alpha\lVert\mathbf{Z}- \textbf{E}\rVert_{2}^{2}\right)}{\sum_{n=1}^{N}\exp\left(-\alpha\lVert\mathbf{Z}- \textbf{E}^{(n)}\rVert_{2}^{2}\right)}\right)\right]\\
    &= 
    \mathbb{E}_{\mathcal{D}}\left[\alpha\lVert\mathbf{Z}- \textbf{E}\rVert_{2}^{2}+\log\left(\sum_{n=1}^{N}\exp\left(-\alpha\lVert\mathbf{Z}- \textbf{E}^{(n)}\rVert_{2}^{2}\right)\right)\right],
\end{align*}
which consists of two terms: (1) the first term is a scaled Euclidean distance with a scalar $\alpha > 0$, which minimizes the Euclidean distance between any global speaker vector and its corresponding separative embedding. (2) the second term is a logarithmic sum of exponentials, with a negative sign on the Euclidean distance, which pulls the separative embedding away from all other global speaker vectors.
\vspace{-1em}
\end{proof}
Besides, we can also relate the proposed CSC loss with the existing prior work.
\begin{claim}
\label{claim:4}
CSC loss can be seen as a rescaled $l$-2 normalization of InfoNCE loss proposed in \cite{nce18}.
\end{claim}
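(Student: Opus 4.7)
The plan is to expose the correspondence by making the $l$-2 normalization explicit and then algebraically reducing the proposed $f(\mathbf{Z}, \textbf{E})$ to the log-bilinear form used in InfoNCE. Recall that InfoNCE replaces $f$ with an exponentiated (rescaled) inner product between the embedding and the prediction target, and the CSC $f$ in Eq. (\ref{eq:1.2}) is an exponentiated negative squared Euclidean distance. Since squared distance and inner product are linked through the polarization identity when the vectors lie on a sphere, $l$-2 normalization should make the two losses coincide up to constants.

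First, I would assume without loss of generality that both $\mathbf{Z}$ and every $\textbf{E}^{(n)}$ are $l$-2 normalized (this is the ``$l$-2 normalization'' alluded to in the claim). Then I would expand
\begin{equation*}
\lVert\mathbf{Z} - \textbf{E}^{(n)}\rVert_2^2 \;=\; \lVert\mathbf{Z}\rVert_2^2 + \lVert\textbf{E}^{(n)}\rVert_2^2 - 2\,\mathbf{Z}^\top\textbf{E}^{(n)} \;=\; 2 - 2\,\mathbf{Z}^\top\textbf{E}^{(n)},
\end{equation*}
so that $f(\mathbf{Z}, \textbf{E}^{(n)}) = \exp(-2\alpha)\exp(2\alpha\,\mathbf{Z}^\top\textbf{E}^{(n)})$. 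Substituting into Eq. (\ref{eq:1.1}), the common factor $\exp(-2\alpha)$ appears identically in the numerator and in every summand of the denominator and therefore cancels.

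Carrying out the cancellation yields
\begin{equation*}
\mathcal{L}_{\tiny\text{CSC}} \;=\; -\,\mathbb{E}_{\mathcal{D}}\!\left[\log\frac{\exp\!\left(2\alpha\,\mathbf{Z}^\top\textbf{E}\right)}{\sum_{n=1}^{N}\exp\!\left(2\alpha\,\mathbf{Z}^\top\textbf{E}^{(n)}\right)}\right],
\end{equation*}
which I would then identify term-for-term with the InfoNCE objective of \cite{nce18}, whose log-bilinear density ratio $f_{\text{NCE}}(\mathbf{z}, \mathbf{c}) = \exp(\mathbf{z}^\top W \mathbf{c})$ specializes to $\exp(2\alpha\,\mathbf{z}^\top\mathbf{c})$ when $W = 2\alpha\,\mathbf{I}$ and the vectors are unit-norm. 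Hence $\mathcal{L}_{\tiny\text{CSC}}$ is exactly InfoNCE on $l$-2 normalized embeddings with temperature $(2\alpha)^{-1}$, justifying the description ``rescaled $l$-2 normalization of InfoNCE''.

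The main obstacle is not algebraic but interpretive: I need to make precise what ``rescaled $l$-2 normalization'' means and justify that the cancellation of $\exp(-2\alpha)$ requires \emph{all} candidate speaker vectors $\textbf{E}^{(n)}$ to share the same norm, not just the positive target. Once the uniform normalization assumption is stated, the equivalence is a one-line consequence of polarization, and the factor $2\alpha$ naturally plays the role of the inverse temperature that InfoNCE typically hides inside its learnable similarity matrix.
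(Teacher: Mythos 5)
Your algebra is correct and rests on the same single identity the paper uses, namely the expansion $\lVert\mathbf{Z}-\textbf{E}^{(n)}\rVert_2^2=\lVert\mathbf{Z}\rVert_2^2+\lVert\textbf{E}^{(n)}\rVert_2^2-2\,\mathbf{Z}^\top\textbf{E}^{(n)}$; the difference is in where you stop and what you assume. The paper does not normalize the embeddings at all: it carries the norm terms along and exhibits the identity at the level of the scoring function,
\begin{equation*}
f(\mathbf{Z},\textbf{E})=\frac{f_\text{InfoNCE}(\mathbf{Z},\textbf{E})^{2\alpha}}{\exp\left(\alpha\lVert\mathbf{Z}\rVert_{2}^{2}+\alpha\lVert\textbf{E}\rVert_{2}^{2}\right)},
\end{equation*}
reading ``rescaled'' as the exponent $2\alpha$ and ``$l$-2 normalization'' as the division by the norm-dependent factor. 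You instead impose that $\mathbf{Z}$ and every $\textbf{E}^{(n)}$ lie on the unit sphere, which turns that norm-dependent factor into a constant that cancels across the softmax, and you correctly flag that the cancellation needs \emph{all} candidates, not just the target, to share the same norm. What you buy is a strictly stronger conclusion --- exact equality of the two losses, with $2\alpha$ as an inverse temperature --- at the price of a hypothesis the paper never states and which its architecture does not obviously enforce; what the paper buys is an unconditional identity, but only for $f$ itself, since its normalization factor depends on $n$ through $\lVert\textbf{E}^{(n)}\rVert_2^2$ and therefore does not cancel in the loss. Both are legitimate readings of the (admittedly informal) claim; if you keep your version, state the uniform-norm assumption explicitly as you already propose, since without it your displayed equality for $\mathcal{L}_{\tiny\text{CSC}}$ fails.
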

\begin{proof}
The InfoNCE loss has a different form, $f_\text{InfoNCE}(\mathbf{Z}, \textbf{E})=\exp\left(\mathbf{Z}^\top \textbf{E}\right),$
than our proposed $f(\mathbf{Z}, \textbf{E})$, we get
\begin{align*}
f(\mathbf{Z}, \textbf{E})&=\exp\left(-\alpha\lVert\mathbf{Z}- \textbf{E}\rVert_{2}^{2}\right)
=\frac{\exp\left(2\alpha\mathbf{Z}^\top \textbf{E}\right)}{\exp\left(\alpha\lVert\mathbf{Z}\rVert_{2}^{2}+\alpha\lVert\textbf{E}\rVert_{2}^{2}\right)}\\
&=\frac{f_\text{InfoNCE}(\mathbf{Z}, \textbf{E})^{2\alpha}}{\exp\left(\alpha\lVert\mathbf{Z}\rVert_{2}^{2}+\alpha\lVert\textbf{E}\rVert_{2}^{2}\right)}
\end{align*}
\end{proof}
\vspace{-3em}
\begin{figure*}[hbt]
\label{fig:4plots} 
      \begin{subfigure}{0.29\textwidth}
      \centering
      \includegraphics[width=\linewidth]{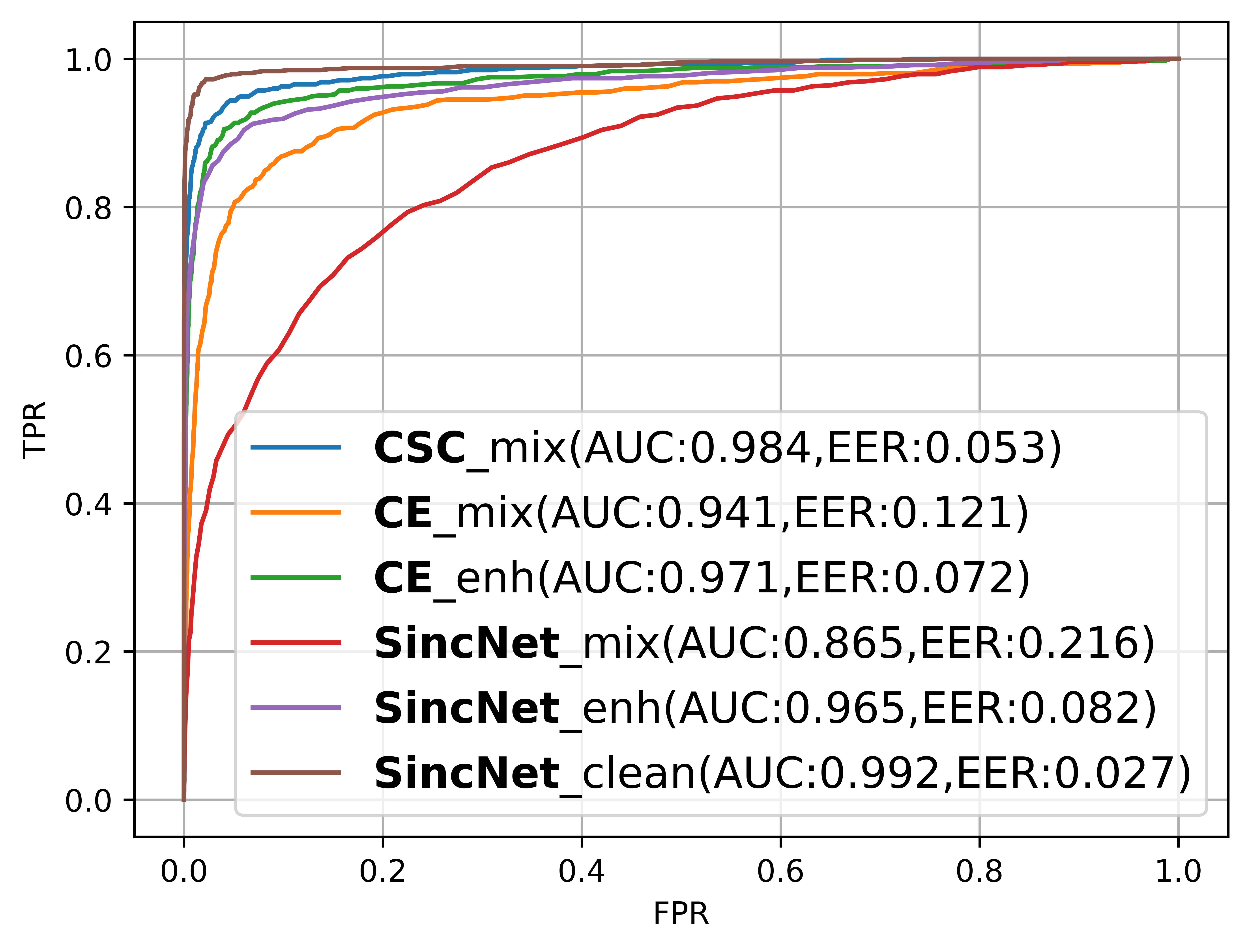}         
      \caption{ROCs}
      \vspace{-0.5em}
      \label{fig:roc}
      \end{subfigure}
      \begin{subfigure}{0.23\textwidth}
      \centering
      \includegraphics[width=\linewidth]{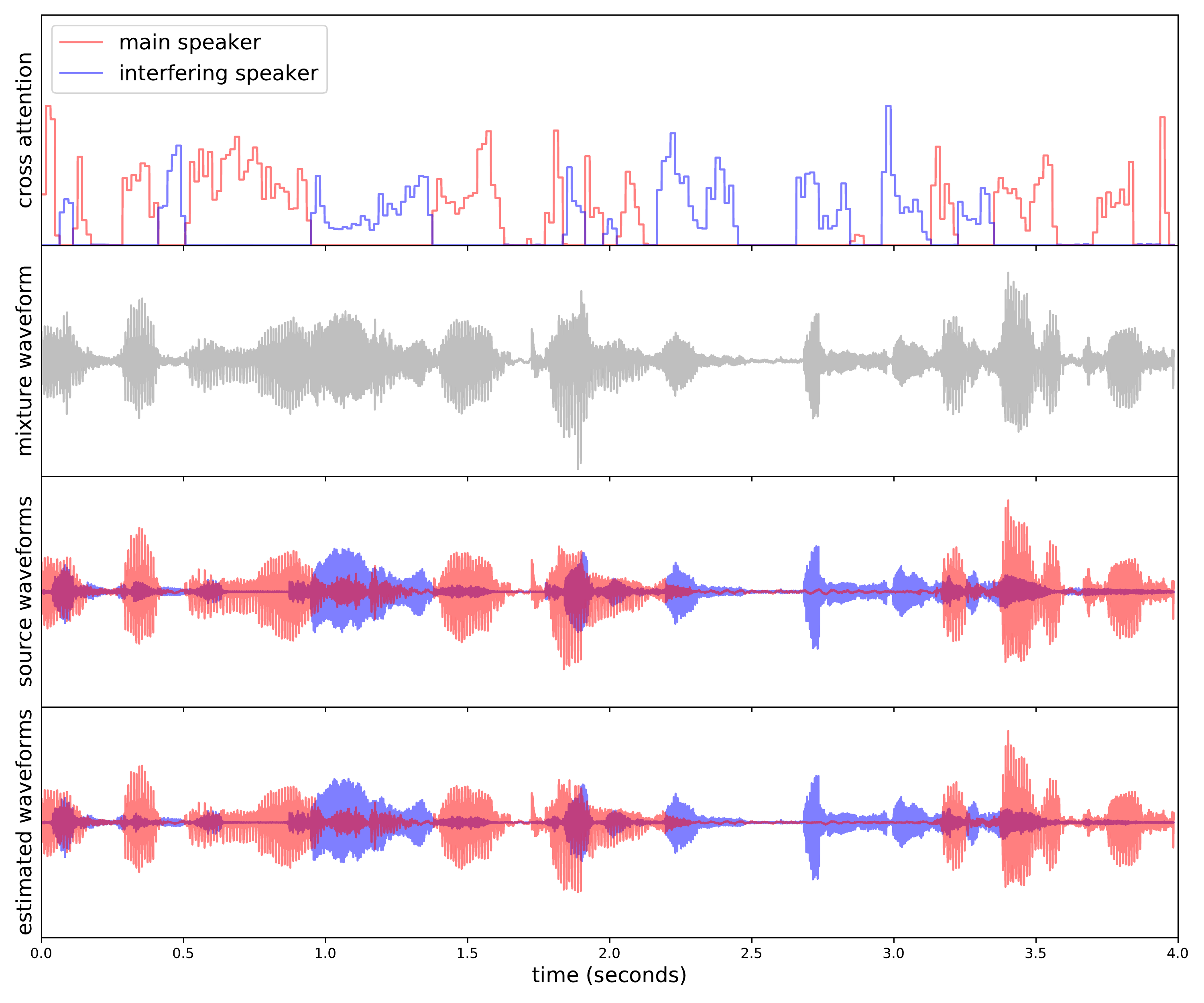}         
      \caption{A male interfered by a female}
      \vspace{-0.5em}
      \label{fig:male_female}
      \end{subfigure}
    \begin{subfigure}{0.23\textwidth}
      \centering
       \includegraphics[width=\linewidth]{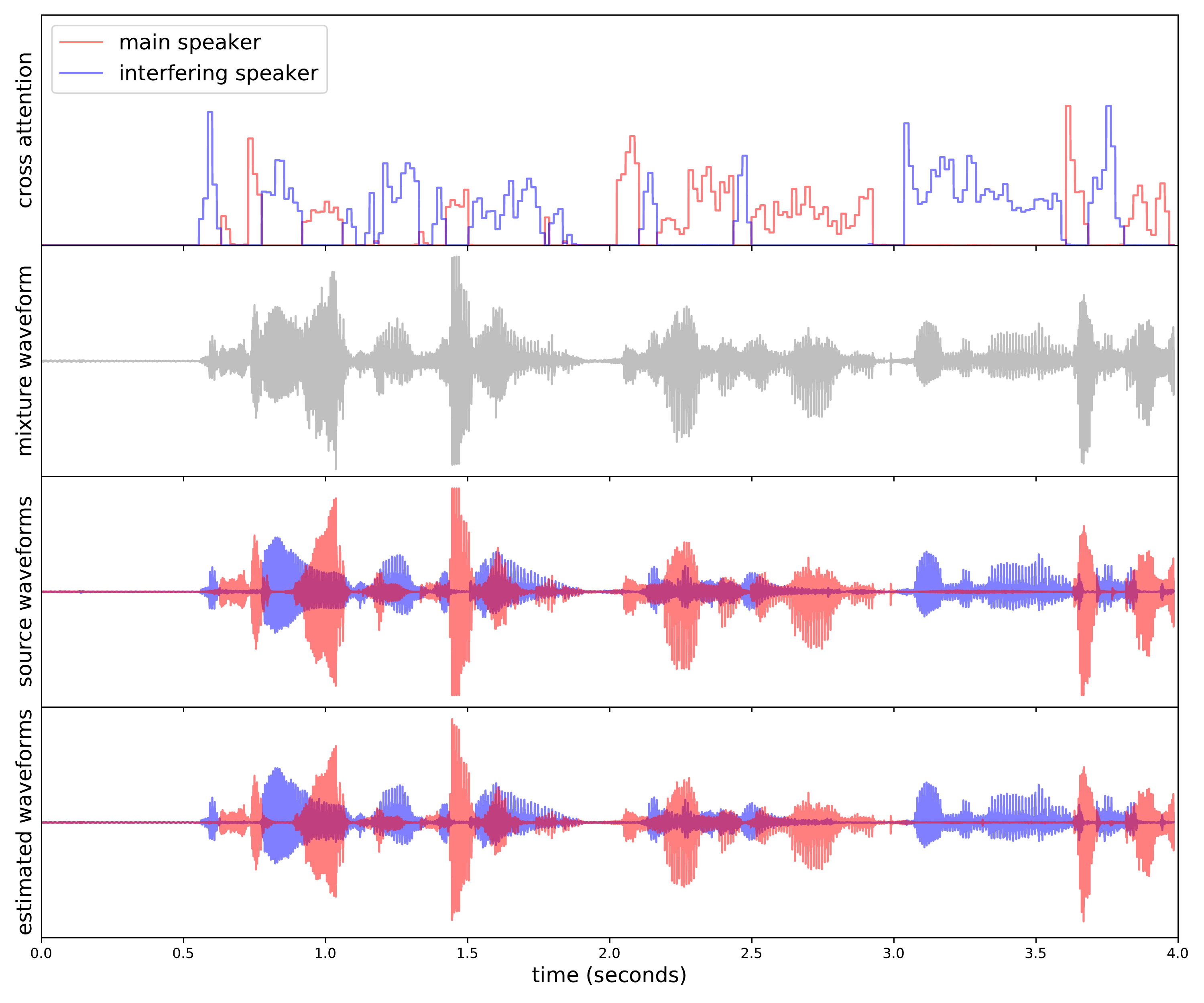} 
       \caption{A male interfered by another}
       \vspace{-0.5em}
       \label{fig:male_male}
       \end{subfigure}
      \begin{subfigure}{0.23\textwidth}
      \centering
      \includegraphics[width=\linewidth]{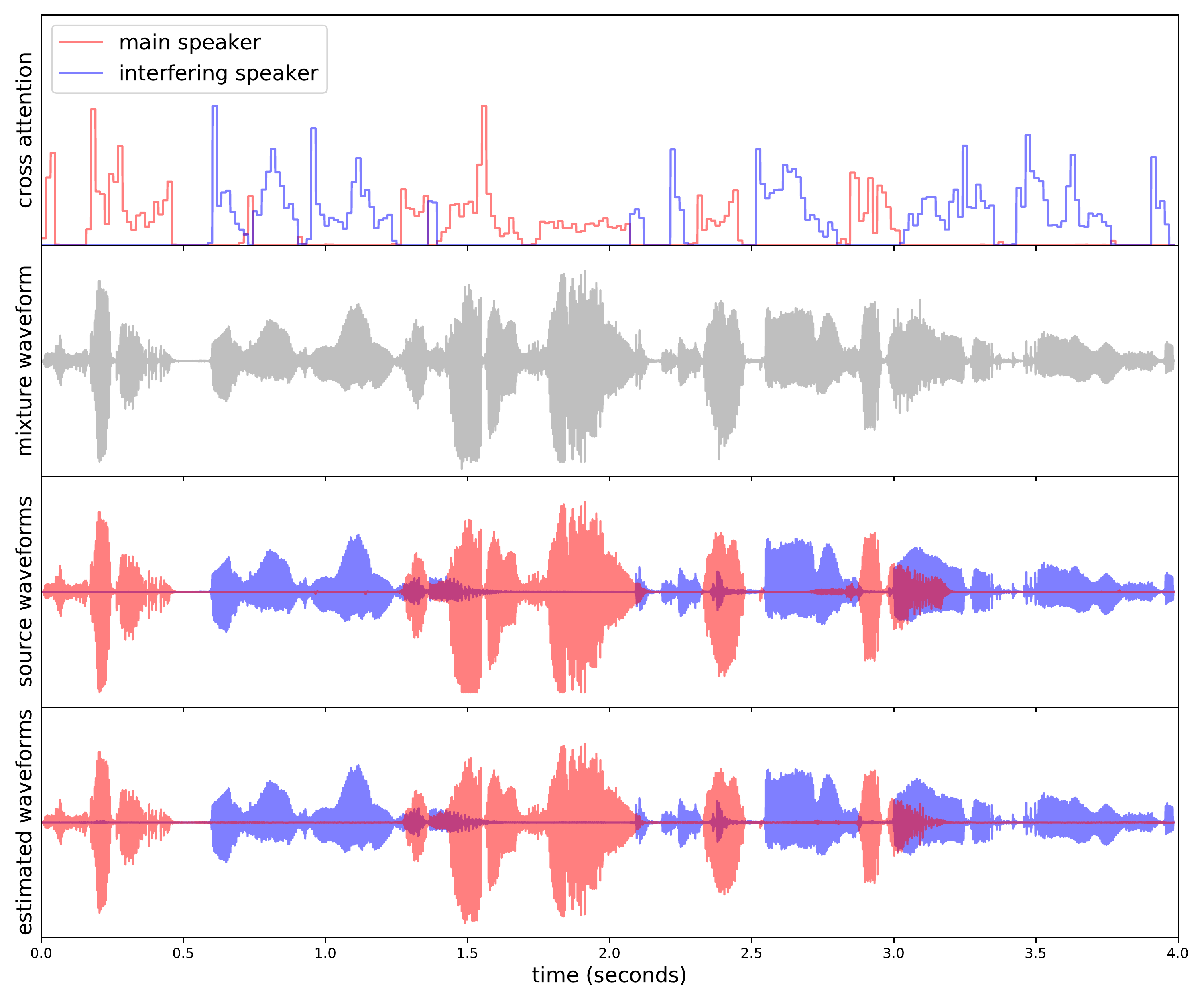} 
      \caption{A female interfered by another}
      \vspace{-0.5em}
      \label{fig:female_female}
      \end{subfigure}
\caption{(a) ROC curves by the various models, (b-d)Selective bottom-up cross attentions automatically learned based on mixture observations}
\vspace{-2em}
\end{figure*}
\subsection{Permutation Invariant Training Speedup}
\vspace{-0.5em}
\label{sec:pit}
As mentioned in Sec.\ref{sec:2}, the shared bottom features are jointly learned towards another level of abstraction tasks, i.e., speech separation (SS) to reconstruct the source signals. The joint training loss is $\mathcal{L}_{\tiny\text{SI-SNR}}+\lambda(\mathcal{L}_{\tiny\text{CSC}}+\mathcal{L}_{\tiny\text{reg}})$, where $\mathcal{L}_{\tiny\text{SI-SNR}}$ is the scale-invariant signal-to-noise ratio \cite{le2019sdr} loss for training $g_{ss}$, while $\mathcal{L}_{\tiny\text{CSC}}+\mathcal{L}_{\tiny\text{reg}}$ is for training $g_{spk}$, where $\mathcal{L}_\text{reg}$ is a regularization loss that avoids collapsing to a trivial solution of all zeros, and $\lambda$ is a weighting factor for the joint loss for training $g_{enc}$. Noted that for computing either the speech loss $\mathcal{L}_\text{SI-SNR}$ or the speaker loss in $\mathcal{L}_{\tiny\text{CSC}}$, we need to assign the correct target source. During training, we use the utterance-level permutation invariant training (u-PIT) method \cite{yu2017permutation} to solve the permutation issue. We start with using u-PIT to calculate the speech loss $\mathcal{L}_{\tiny\text{SI-SNR}}$, which expends heavy computation at reconstructing every detail of the signals, while often ignoring the global context. We use the separative embedding to modulate the signal reconstruction in the speech-stimuli space by a FiLM \cite{Ethan2017film} method so that the permutation solution for one task also applies to the other task. We switch to using u-PIT to calculate the speaker loss $\mathcal{L}_{\tiny\text{CSC}}$ instead after an empirical number of epochs, thereafter, the speech reconstruction becomes PIT-free and achieves a notable training acceleration.
\vspace{-0.5em}
\section{Experiments}
\vspace{-0.5em}
\label{sec:4}
\subsection{Datasets and Model Setup}
\vspace{-0.5em}
\label{sec:dataset_model}
We firstly evaluated and compared the embedding learnt by \textbf{CPC}\cite{nce18} versus our proposed \textbf{CSC} on a small benchmark dataset WSJ0-2mix\cite{hershey2016deep} for a separation task. The separation SDR result by \textbf{CPC} significantly decreased by an absolute $2.8$dB comparing to \textbf{CSC}, and therefore we continued with \textbf{CSC} to compare with the traditional SV methods on a larger-scale dataset, Libri-2mix, where we split the Librispeech \cite{panayotov2015librispeech} into (1) a training set containing 12055 utterances drawn from 2411 speakers, i.e., $5$ utterances per speaker\footnote{This limitation makes our corpus much more challenging than the lately released LibriMix \cite{cose2020librimix}, meanwhile, more realistic as it was often hard to collect numerous utterances from the user in real industrial applications.}, (2) a validation set containing another $7233$ utterances drawn from the same $2411$ speakers, and (3) a test set containing $4380$ utterances evenly drawn from $73$ speakers. 

To built our proposed model \textbf{CSC}, we inherited from DPRNN's setup \cite{luo2019dual} for the raw waveform encoding, segmentation, and decoding for the SS task. Then, we used $4$, $2$, and $2$ GALR \cite{max2021galr} blocks for $g_{enc}$, $g_{spk}$, and $g_{ss}$, respectively. Inherited from the settings in \cite{max2021galr, luo2019dual}, a 5ms (8-sample) window length and $D=128, K=128, Q=16$ were used. We empirically set $\lambda=10$. For each training epoch, mixture signals of $4$s were generated online by mixing each clean utterance with another random utterance in the training set in a uniform SIR of $0-5$dB.
For testing, mixture signals were pre-mixed in the same SIR range.

For reference systems, we built a SincNet-based SV model \cite{Yoshua2018sv} \textbf{SincNet}, for it was a conventional speaker-vector-based neural network achieving reproducible and strong performance on Librispeech. Moreover, for the ablation study, we built another reference system \textbf{CE} by using $6$ GALR blocks for $g_{spk}$ (note that here $g_{enc}$ could merge into $g_{spk}$ as they no longer had the distinction and that the overall model size kept unchanged for the speaker task) and ablated the proposed method by removing the $g_{att}$ and $g_{ss}$ models from the graph and replacing the proposed loss with the supervised Cross-Entropy (CE) loss.
All the models were implemented and trained using PyTorch.
\vspace{-0.5em}
\subsection{Results and Discussion}
\vspace{-0.5em}
\label{sec:result}
To evaluate the discriminative power of the learned representations, Fig.\ref{fig:roc} shows the Equal Error Rate (EER) and Area Under Curve (AUC) as the performance metrics on the SV task. ``\textbf{[Model]}\_mix", ``\_enh", and ``\_clean" indicate systems training and test on mixture data, enhanced/separated data (by pre-processing the mixture using a SOTA SS system \cite{max2021galr}), and clean data (which we used as a high-bar reference), respectively. Our proposed system outperforms all reference models by a large margin, suggesting our learned representations have strong discriminative power and can achieve high performance in difficult conditions.

Moreover, as shown in Fig. \ref{fig:male_female}-\ref{fig:female_female}, the proposed cross-attention mechanism improves the model transparency, since we can easily interpret the attended content, that at any given time segment, it is generally the most salient source in the mixture that triggers the corresponding cross-attention curve. In places where both sources are soft, all cross-attention curves are low to ignore the possibly noisy and unreliable parts. Also, cross-attention curves rarely raise concurrently. These are all surprisingly similar to a human's auditory selective attention \cite{mesga2012nature, costa2013tunein, obleser2015selective, sullivan2015eeg} in behavioral and cognitive neurosciences, i.e., a listener can not attend to both two concurrent speech streams, while he usually selects the attended speech and ignores other sounds in a complex auditory scene. Note that this selective cross attention property of our model was automatically learned from the mixture without resorting to any regularization or strong supervision.

\section{Conclusions}
\vspace{-0.5em}
\label{sec:conc}
This paper introduces a novel Contrastive Separative Coding method to draw useful representations directly from observations in complex auditory scenarios. It is helpful to learn low-level shared representations towards various levels of separative task. In-depth theoretical studies are provided on the proposed CSC loss regarding the mutual information estimation and maximization, as well as its connection to the existing prior work. The proposed cross-attention mechanism is shown effective in extracting the global aggregation of information across different corrupted observations in various interfering conditions. The learned representation have strong discriminability that its performance is even approaching the clean-condition performance of a conventional fully-supervised SV system.
Another interesting observation is the automatically learned bottom-up cross attentions that are very similar to an auditory selective attention, and we will explore this merit on speaker diarization in our future work.


\bibliographystyle{IEEEbib}
\bibliography{strings,refs}

\end{document}